\newcommand{\apx}{\textsc{APX}\xspace}
\newcommand{\apxh}{\textsc{APX-hard}\xspace}
\newcommand{\np}{\textsc{NP}\xspace}
\newcommand{\nph}{\textsc{NP-hard}\xspace}
\newcommand{\mkcs}{\textsc{M$k$CS}\xspace}
\newcommand{\msc}{\textsc{MSC}\xspace}
\begin{document}
\title{Approximate Minimum Sum Colorings and Maximum \texorpdfstring{$k$}{k}-Colorable Subgraphs of Chordal Graphs}
\titlerunning{Min. Sum Coloring and Max. \texorpdfstring{$k$}{k}-Colorable Subgraphs of Chordal Graphs}
\author{Ian DeHaan\thanks{Supported by an NSERC Undergraduate Student Research Award held at the University of Alberta.}\inst{1} \and
Zachary Friggstad\thanks{Supported by an NSERC Discovery Grant and Accelerator Supplement.}\inst{2}}
\authorrunning{I. DeHaan and Z. Friggstad}
\institute{Department of Combinatorics and Optimization, University of Waterloo, \email{ijdehaan@uwaterloo.ca} \and
Department of Computing Science, University of Alberta,
\email{zacharyf@ualberta.ca}\\
}
\maketitle              %

\begin{abstract}
We give a $(1.796+\epsilon)$-approximation for the minimum sum coloring problem on chordal graphs, improving over the previous 3.591-approximation by 
Gandhi et al. [2005]. To do so, we also design the first polynomial-time approximation scheme for the maximum $k$-colorable subgraph problem in chordal graphs.%
\end{abstract}

\section{Introduction}

We consider a coloring/scheduling problem introduced by Kubicka in 1989 \cite{ChromSumThesis}.
\begin{definition}
In the \textsc{Minimum Sum Coloring} (\msc) problem, we are given an undirected graph $G = (V,E)$. The goal is to find a proper coloring $\phi : V \rightarrow \{1, 2, 3, \ldots \}$ of vertices with positive integers which minimizes $\sum_{v \in V} \phi(v)$.
In weighted \msc, each vertex $v \in V$ additionally has a weight $w_v \geq 0$ and the goal is then to minimize $\sum_{v \in V} w_v \cdot \phi(v)$.
\end{definition}
Naturally, in saying $\phi$ is a proper coloring, we mean $\phi(u) \neq \phi(v)$ for any edge $uv \in E$.
\msc is often used to model the scheduling of unit-length dependent jobs that utilize shared resources. Jobs that conflict for resources cannot be scheduled
at the same time. The goal in \msc is then to minimize the average time it takes to complete a job.

In contrast with the standard graph coloring problem, where we are asked to minimize the number of colors used, sum coloring is \nph on many simple graph types. 
Even on bipartite and interval graphs, where there are linear time algorithms for graph coloring, \msc remains \apxh \cite{bipartiteSC,intervalSCHard}. 

In \cite{Perfect4Approx}, it was shown that if one can compute a maximum independent set in
any induced subgraph of $G$ in polynomial time, then iteratively coloring $G$ by greedily choosing a maximum independent set of the uncolored nodes each step yields
a 4-approximation for \msc. A series of improved approximations for other graph classes followed, these are summarized in Table \ref{table:results}.
Of particular relevance for this paper are results for perfect graphs and interval graphs. For \msc in perfect graphs, the best approximation is $\mu^{\star} \approx 3.591$, the solution to $\mu \ln \mu = \mu + 1$.
For \msc in interval graphs, the best approximation is $\frac{\mu^{\star}}{2} \approx 1.796$.

\begin{table}
\centering
\def\arraystretch{1.5}

\caption{Known results for sum coloring. The $O^{\star}$-notation hides poly($\log \log n$) factors. Our work appears in bold.}
\label{table:results}

\begin{tabular}{|c|c|c|}
\hline
                & u.b.    & l.b.             \\ \hline
General graphs  & $O^{\star}(n / \log^3 n)$ \cite{Perfect4Approx,GeneralSCApprox}    & $O(n^{1-\epsilon})$ \cite{Perfect4Approx,ColoringHardness} \\ \hline
Perfect graphs  & $3.591$ \cite{PerfectSC} & \apxh  \cite{bipartiteSC}        \\ \hline
Chordal graphs  & ${\bf 1.796+\epsilon}$ & \apxh \cite{intervalSCHard}         \\ \hline
Interval graphs & $1.796$ \cite{IntervalSC} & \apxh \cite{intervalSCHard}         \\ \hline
Bipartite graphs & $27/26$ \cite{bipariteSC2726} & \apxh \cite{bipartiteSC}         \\ \hline
Planar graphs   & PTAS \cite{PlanarSC}    & \nph   \cite{PlanarSC}           \\ \hline
Line graphs     & 1.8298 \cite{LineGraphSC}       & \apxh   \cite{LineGraphHard}           \\ \hline
\end{tabular}

\end{table}
In this paper, we study \msc in chordal graphs. A graph is chordal if it does not contain a cycle of length at least 4 as an induced subgraph. Equivalently, every cycle of length at least 4 has a chord - an edge connecting two non-consecutive nodes on the cycle. Chordal graphs form a subclass of perfect graphs, so we can color them optimally in polynomial time. But \msc itself remains \apxh in chordal graphs \cite{intervalSCHard}, as they generalize interval graphs.
The class of chordal graphs is well studied; linear-time algorithms have been designed to recognize them, to compute maximum independent sets, and to find minimum colorings, among other things. A comprehensive summary of many famous results pertaining to chordal graphs can be found in the excellent book by Golumbic \cite{agt}.
Chordal graphs also appear often in practice; for example Pereira and Palsberg study register allocation problems (which can be viewed as a sort of graph coloring problem) and observe that the interference graphs for about 95\% of the methods in the Java 1.5 library are chordal when compiled with a particular compiler \cite{ChordalImportance}.

Our main result is an improved approximation algorithm for \msc in chordal graphs.
\begin{theorem} \label{scapprox}
    For any constant $\epsilon > 0$, there is a polynomial-time $\frac{\mu^{\star}}{2} + \epsilon \approx 1.796 + \epsilon$ approximation for weighted \msc on chordal graphs.
\end{theorem}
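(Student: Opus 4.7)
The plan is to prove Theorem \ref{scapprox} by combining two essentially modular ingredients: (i) a PTAS for the M$k$CS problem on chordal graphs, and (ii) the existing ``one-shot'' reduction from M$k$CS to \msc that, for interval graphs, already yields the $\mu^\star/2$ guarantee when M$k$CS can be solved exactly. The abstract of the paper already announces (i) as an independent contribution, so I would formulate it as a separate theorem and prove Theorem \ref{scapprox} as a direct corollary.

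For step (ii), I would invoke the algorithmic framework used by Halld\'orsson et al.\ to obtain the $\mu^\star/2$-approximation for interval graphs. Recall the standard identity that, for a coloring with classes $C_1,C_2,\ldots$, the sum cost equals $\sum_{k\ge 0}\big(w(V)-w(C_1\cup\cdots\cup C_k)\big)$, and that $w(C_1\cup\cdots\cup C_k)\le A_k$ where $A_k$ is the max-weight $k$-colorable subgraph. The $\mu^\star/2$ framework picks a ``threshold'' $k^\star$ (balancing the contribution of vertices colored with colors $\le k^\star$ against the tail), uses an optimal M$k$CS solution to color the first $k^\star$ colors, and then recurses or greedily colors the rest. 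The only place exactness of M$k$CS enters the analysis is in bounding the weight scheduled within the first few colors against the lower bound $\sum_k (w(V)-A_k)$; replacing exactness by a $(1+\epsilon')$-approximation degrades the guarantee by a $(1+\epsilon')$ factor, which (after rescaling $\epsilon'$) yields the claimed $\mu^\star/2+\epsilon$ bound. Care must be taken because the framework invokes M$k$CS for several values of $k$; one must argue the PTAS can be invoked polynomially many times within the overall budget, which is immediate since the reduction only needs $n$ calls.

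The bulk of the work, and the main obstacle, is therefore the PTAS for M$k$CS on chordal graphs (hard when $k$ is part of the input, since it generalizes vertex deletion problems on chordal graphs). The natural approach is dynamic programming on a clique tree, exploiting the key structural fact that in a chordal graph a set $S$ is $k$-colorable iff every maximal clique contains at most $k$ vertices of $S$. The raw DP states ``subset of the current bag kept in $S$'' which is $2^{|B|}$ and hence not polynomial when $\omega(G)$ is large. The main technical step I expect will be a weight-bucketing / rounding trick: group vertex weights into $O(\log n/\epsilon)$ classes, and let the DP state for each bag be a \emph{profile} recording how many vertices from each class are kept (not which). This keeps the state space polynomial while preserving a $(1+\epsilon)$-factor, provided one argues transitions between parent and child bags can be computed by a matching/assignment subroutine consistent with the chosen profiles.

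If the profile-based DP turns out to be too coarse for correctness (e.g., because a child bag must agree with its parent on the identities of shared vertices, not just on counts), I would fall back to a more refined hybrid: handle bags of size $\le 1/\epsilon$ with the exact $2^{|B|}$ DP, and bags with $|B|>1/\epsilon$ via bucketing combined with a carefully designed shifting argument on the clique tree so that ``boundary'' vertices with the wrong profile contribute only an $\epsilon$-fraction of the weight. In either case, the clique-tree structure is what makes the PTAS possible; the reduction to \msc in step (ii) is purely black-box.
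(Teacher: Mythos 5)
There is a genuine gap in your step (ii), and it is precisely the obstacle the paper is built to overcome. You claim that plugging a $(1-\epsilon')$-approximation (in weight) for \mkcs into the Halld\'orsson--Kortsarz--Shachnai concatenation framework degrades the final guarantee by only a $(1+\epsilon')$ factor. This is false as stated: the framework's analysis charges the cost of vertices left uncolored after the first $k$ colors against the residual $w(V)-A_k$, where $A_k$ is the \emph{optimal} $k$-colorable weight. With an approximate \mkcs oracle, the uncovered weight after one round is $w(V)-(1-\epsilon')A_k = (w(V)-A_k)+\epsilon' A_k$, and the additive term $\epsilon' A_k$ can dwarf $w(V)-A_k$ (e.g.\ when $A_k$ is close to $w(V)$). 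These additive losses compound over the $\Theta(\log n)$ rounds of the geometric schedule and are then multiplied by growing color indices, so they do not fold into a multiplicative $(1+\epsilon')$. The paper explicitly notes that a $(1,1+\epsilon)$-approximation (exact weight, extra colors) would let the interval-graph technique go through, but a $(1-\epsilon,1)$-approximation does not. Its workaround is a time-indexed configuration LP (variables $z_{C,k}$ over $k$-colorable subgraphs), solved via approximate separation of the dual using the \mkcs PTAS, and rounded by sampling subgraphs at geometrically spaced $k$ with a random offset; the key recurrence $p_{v,j}\le(1-y_{v,j})\rho+(1-\rho)p_{v,j-1}$ controls the leakage from inexactness and forces the constraint $c<1/(1-\rho)$, which is why the ratio $\frac{\rho\gamma(c+1)}{2(1-(1-\rho)c)\ln c}$ only tends to $\mu^\star/2$ as $\rho\to 1$. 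Your proposal contains no mechanism playing this role.

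Your step (i) also diverges from the paper and is not carried to completion. The paper's PTAS does not use a clique-tree DP with weight-bucketed profiles; it writes the LP $\max\{\sum_v w_v x_v : x_v + x(N^{left}(v))\le k\}$ over a perfect elimination ordering, rounds each vertex independently with probability $(1-k^{-1/3})x_v$, greedily discards vertices whose left-neighborhood in the sample already has $k$ members, and uses Chebyshev to show each vertex survives with probability at least $(1-2k^{-1/3})x_v$; small $k$ is handled by the known $n^{O(k)}$ exact algorithm. Your profile DP founders on exactly the consistency issue you flag yourself (children must agree with parents on vertex identities, not counts), and the proposed ``shifting argument'' fallback is not developed enough to assess. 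As written, neither half of your proposal constitutes a proof.
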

That is, we can approximate \msc in chordal graphs essentially within the same guarantee as for interval graphs.
Prior to our work, the best approximation in chordal graphs was the same as in perfect graphs: a 3.591-approximation by Gandhi et al. \cite{PerfectSC}.

To attain this, we study yet another variant of the coloring problem.
\begin{definition}
In the weighted \textsc{Maximum $k$-Colorable Subgraph} (\mkcs) problem, we are given a graph $G = (V,E)$, vertex weights $w_v \geq 0$, and a positive integer $k$. The goal is to find a maximum-weight subset of nodes $S \subseteq V$ such that the induced subgraph $G[S]$ is $k$-colorable.
\end{definition}
 We also design a polynomial-time approximation scheme (PTAS) for weighted \mkcs in chordal graphs.
\begin{theorem} \label{ptas}
    For any $\epsilon > 0$, there is a $(1 - \epsilon)$-approximation for weighted \mkcs in chordal graphs. 
\end{theorem}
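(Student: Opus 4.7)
The plan is to do a bottom-up dynamic program on a clique tree of the chordal graph $G$. Since chordal graphs are perfect, a set $S \subseteq V$ induces a $k$-colorable subgraph if and only if $|S \cap C| \le k$ for every maximal clique $C$ of $G$. Chordal graphs admit a clique tree $T$ with polynomially many bags (the maximal cliques) satisfying the running intersection property, so \mkcs reduces to: maximize $\sum_{v \in S} w_v$ subject to $|S \cap B| \le k$ at every bag $B$ of $T$.

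Root $T$ at an arbitrary bag. For each bag $B$ with parent $P$, the DP state records which subset of $B \cap P$ is selected; the entry stores the maximum total weight over the subtree at $B$ consistent with that state and respecting the clique-size constraints inside the subtree. Transitions merge children's DPs that agree on shared vertices and then choose the remaining vertices of $B \setminus P$ subject to $|S \cap B| \le k$. Since any feasible $S$ has $|S \cap P| \le k$, it suffices to enumerate subsets of $B \cap P$ of size at most $k$, giving $\binom{|B \cap P|}{\le k}$ states per bag. For any constant $k$ this is already exact and polynomial; the interesting regime for a PTAS is when $k$ grows with $n$.

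For large $k$, I would compress the DP state via weight rounding. Rounding each vertex weight down to a power of $(1+\epsilon)$ loses at most a $(1+\epsilon)$ factor and partitions $V$ into $O(\epsilon^{-1} \log n)$ weight classes (after discarding negligibly small weights). The plan is to index DP states not by the exact subset of $B \cap P$ selected but, for each weight class, by the number of vertices of $B \cap P$ chosen; the number of such count vectors is polynomial in $n$ for fixed $\epsilon$. The main technical obstacle is showing that a near-optimal solution can be transformed into one whose DP signatures fit this reduced space without losing more than a $(1+\epsilon)$ factor. A natural route is an exchange argument that swaps same-weight-class vertices to match a chosen DP trace, using the running intersection property of $T$ to confine the effect of each swap to the unique path in $T$ between the two vertices' top-bags. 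Handling these swaps while preserving every clique constraint along the way, and wiring the result into the DP so that the overall algorithm remains polynomial, is the heart of the argument.
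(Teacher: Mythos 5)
Your reduction of \mkcs in chordal graphs to ``at most $k$ selected vertices per bag of a clique tree'' is correct, and the exact DP for constant $k$ is essentially the known $n^{O(k)}$ algorithm that the paper also invokes in that regime. But for large $k$ --- the only regime where anything new is needed --- your proposal has two genuine gaps. First, the state count is not polynomial as claimed: with $O(\epsilon^{-1}\log n)$ weight classes and per-class counts up to $|B\cap P|\le n$, the number of count vectors per bag is $n^{O(\epsilon^{-1}\log n)}$, which is quasi-polynomial. Second, and more fundamentally, replacing the identity of the selected subset of $B\cap P$ by per-class counts discards exactly the information the DP needs for correctness: a vertex $v\in B\cap P$ lies in every bag on a connected subtree of $T$, so whether \emph{that particular} $v$ is selected affects the constraint $|S\cap B'|\le k$ in every other bag $B'$ containing $v$, and also whether two siblings' partial solutions are even consistent. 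The exchange argument you gesture at --- swapping same-weight-class vertices along the tree path between their top bags --- is precisely the step that must be proved, and you have not shown that such swaps can always be performed without violating a clique constraint in some bag that contains one endpoint of the swap but not the other. You correctly identify this as ``the heart of the argument,'' but it is left entirely unproved, so the proposal does not constitute a proof.

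For comparison, the paper avoids the clique-tree DP entirely in the large-$k$ regime. It writes the natural LP over a perfect elimination ordering, with constraints $x_v + x(N^{left}(v))\le k$, scales the fractional solution down by a factor $1-k^{-1/3}$, includes each vertex independently with that probability, and then does a single greedy pass in elimination order, discarding any vertex whose selected left-neighbourhood already has size $k$. Chebyshev's inequality (second moment method, which also yields easy derandomization via pairwise independence) shows each vertex survives with probability at least $\bigl(1-2k^{-1/3}\bigr)x_v$, giving a $\bigl(1-2k^{-1/3}\bigr)$-approximation; combined with the exact algorithm for $k\le 8/\epsilon^3$ this yields the PTAS. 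If you want to salvage your combinatorial route, you would need both a genuinely polynomial state compression and a worked-out exchange lemma; the LP-rounding route sidesteps both difficulties.
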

Prior to our work, the best approximation recorded in literature was a $1/2$-approximation by Chakaravarthy and Roy \cite{ChordalColor}. Although one could also get a $(1-1/e)$-approximation by greedily finding and removing a maximum-weight independent set of nodes for $k$ iterations, i.e., the maximum coverage algorithm.

Since Theorem \ref{scapprox} improves the approximation ratio of chordal graphs from the current-best ratio for perfect graphs down to essentially the current-best ratio for interval graphs, one might wonder if we can get an improved approximation for \msc in perfect graphs in general. Indeed, if there was a PTAS for \mkcs in perfect graphs then our approach would yield an improved approximation for perfect graphs. Unfortunately this does not seem possible: we adapt the \np-hardness proof for \mkcs in perfect graphs to show \mkcs is in fact \apx-hard in perfect graphs.

\begin{theorem}\label{thm:apxhard}
\mkcs is \apx-hard in perfect graphs even for $k = 2$.
\end{theorem}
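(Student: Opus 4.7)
The plan is to obtain an L-reduction from a standard \apxh problem to \mkcs with $k = 2$ on perfect graphs. A useful structural observation is that for a perfect graph $G$, the induced subgraph $G[S]$ is 2-colorable if and only if $G[S]$ is triangle-free: since $G[S]$ is itself perfect, $\chi(G[S]) = \omega(G[S])$, and $\omega(G[S]) \le 2$ exactly when there is no triangle. Thus, on perfect graphs, \mkcs with $k=2$ is the maximum-weight triangle-free induced subgraph problem, and it suffices to establish \apx-hardness of this variant.

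I would start from the known NP-hardness reduction for \mkcs on perfect graphs (e.g.\ the one implicit in~\cite{ChordalColor}) and argue that it can be made approximation-preserving by using the gap version of the source. Natural choices are \textsc{Max-3SAT-5} or \textsc{Max Independent Set} on cubic graphs --- both are \apxh and well-suited to gadget-style reductions. To guarantee that the constructed graph is perfect, I would use a split-graph template, since split graphs are always perfect: literal (or vertex) gadgets go on the independent side and constraint (or edge) gadgets on the clique side. The triangles of a split graph are of only two kinds --- three clique vertices, or two clique vertices with a common independent-side neighbour --- and this restricted structure is enough to encode both constraint satisfaction and variable consistency via careful choices of adjacency and weight.

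The L-reduction conditions would then be verified by a routine calculation: the optimum of the constructed \mkcs instance is an affine function of the source optimum, and any $(1-\delta)$-approximate \mkcs solution yields a $(1 - c\delta)$-approximate source solution for some constant $c$. \apx-hardness of the source then transfers to \mkcs, yielding Theorem~\ref{thm:apxhard}.

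The main obstacle is simultaneously achieving perfection and gap preservation. Natural ``triangle-per-clause'' constructions fail to be perfect, because shared literal vertices create induced odd holes; the split-graph template avoids this but constrains the available gadgetry significantly. The burden thus falls on the weighting scheme: heavy ``anchor'' vertices must be large enough that every near-optimal solution includes them (so that the encoded combinatorial problem is the right one), yet small enough that their additive contribution does not wash out the multiplicative gap coming from the source problem. Calibrating this balance, and then pinning down explicit L-reduction constants, is the main technical work.
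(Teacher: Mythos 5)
Your proposal is a plan rather than a proof: the entire technical content of the theorem --- an explicit construction that is simultaneously perfect and gap-preserving --- is left unspecified, and you acknowledge as much when you say that calibrating the gadgets and weights ``is the main technical work.'' The paper avoids this work entirely. It does not design a new reduction; it reuses the existing reduction of Addario-Berry et al.\ \cite{PerfectBipartiteSubgraphHard} (note: not \cite{ChordalColor}), which already produces a perfect graph $H$ from a graph $G$ with the exact formula $\mathrm{OPT}_{2\text{-col}}(H) = 3|V(G)| + 2|E(G)| + \alpha(G)$. The only new observation needed is that when $G$ is cubic, the additive overhead $3|V(G)| + 2|E(G)| = 6|V(G)|$ is at most $24\,\alpha(G)$ (since $\alpha(G) \geq |V(G)|/4$ for maximum degree $3$), so a $c$-approximation for the $2$-colorable subgraph of $H$ yields a $1/(25-24c)$-approximation for maximum independent set in cubic graphs, which is \apxh. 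You correctly identified cubic MIS as a good source problem, but the step you treat as the hard part --- building the perfect gadget graph --- is exactly the step you should have looked up rather than reinvented.

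Beyond the incompleteness, your proposed split-graph template is likely a dead end for this problem: in a split graph any triangle-free (hence $2$-colorable) induced subgraph contains at most two vertices of the clique side, so clique-side ``constraint gadgets'' can contribute only $O(1)$ vertices to any feasible solution, which makes it very hard to encode per-clause or per-edge information there with the multiplicative fidelity an L-reduction requires. Your observation that $2$-colorability of an induced subgraph of a perfect graph is equivalent to triangle-freeness is correct and clarifying, but it does not substitute for the missing construction.
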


~

\noindent
{\bf Organization}\\
We begin with a high-level discussion of our techniques. Then, Section \ref{sec:lp} presents the proof of Theorem \ref{scapprox} assuming one has a PTAS for \mkcs in chordal graphs. Theorem \ref{ptas} is proven in Section \ref{sec:ptas}. Finally we prove Theorem \ref{thm:apxhard} in Section \ref{sec:apxhard}.

\subsection{Our Techniques}
Our work is inspired by the 1.796-approximation for \msc in interval graphs by Halld\'{o}rsson, Kortsarz, and Shachnai \cite{IntervalSC}. They show that if one has an exact algorithm for \mkcs, then by applying it to values of $k$ from a carefully selected geometric sequence and ``concatenating'' these colorings, one gets a 1.796-approximation. In interval graphs, \mkcs can be solved in polynomial time using a greedy algorithm. We show that a similar result holds: we show Theorem \ref{scapprox} holds in any family of graphs that admit a PTAS for \mkcs. However, we need to use linear programming techniques instead of a greedy algorithm since their approach seems to heavily rely on getting {\em exact} algorithms for \mkcs.

~

\noindent
{\bf \mkcs in Chordal Graphs}\\
In chordal graphs, \mkcs is \textsc{NP-complete}, but it can be solved in $n^{O(k)}$ time \cite{kColorBrute}. We rely on this algorithm for constant values of $k$, so we briefly summarize how it works to give the reader a complete picture of our PTAS. 

Their algorithm starts with the fact that chordal graphs have the following representation. For each chordal graph $G = (V,E)$ there is a tree $T$ with $O(n)$ nodes of maximum degree $3$ plus a collection of subtrees $\mathcal T = \{T_v : v \in V\}$, one for each $v \in V$. These subtrees satisfy the condition that $uv \in E$ if and only if subtrees $T_u$ and $T_v$ have at least one node in common. For a subset $S \subseteq V$, we have $G[S]$ is $k$-colorable if and only if each node in $T$ lies in at most $k$ subtrees from $\{T_v : v \in S\}$. The tree $T$ and subtrees $\mathcal T$ are computed in polynomial time and then a straightforward dynamic programming procedure is used to find the maximum $k$-colorable subgraph. The states of the DP algorithm are characterized by a node $a$ of $T$ and subtrees $\mathcal S \subseteq \mathcal T$ with $|\mathcal S| \leq k$ where each subtree in $\mathcal S$ includes $a$.

Our contribution is an approximation for large values of $k$. It is known that a graph $G$ is chordal if and only if its vertices can be ordered as $v_1, v_2, \ldots, v_n$ such that for every $1 \leq i \leq n$, the set $N^{left}(v_i) := \{v_j : v_iv_j \in E \text{ and } j < i\}$ is a clique. Such an ordering is called a {\bf perfect elimination ordering}. We consider the following LP relaxation based on a perfect elimination ordering. We have a variable $x_v$ for every $v \in V$ indicating if we should include $v$ in the subgraph.
\[ {\bf maximize} \left\{\sum_{v \in V} w_{v} \cdot x_v : x_v + x(N^{left}(v)) \leq k ~\forall~v \in V, x \in [0,1]^V \right\}. \]
The natural $\{0,1\}$ solution corresponding to a $k$-colorable induced subgraph $G[S]$ is feasible, so the optimum LP solution has value at least the size of the largest $k$-colorable subgraph of $G$.

We give an LP-rounding algorithm with the following guarantee.
\begin{lemma}\label{lem:round}
Let $x$ be a feasible LP solution. In $n^{O(1)}$ time, we can find a subset $S \subseteq V$ such that $G[S]$ is $k$-colorable and $\sum_{v \in S} w_v \geq \left(1- \frac{2}{k^{1/3}}\right) \cdot \sum_{v \in V} w_v \cdot x_v$.
\end{lemma}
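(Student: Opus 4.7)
The plan is to apply independent randomized rounding to a scaled version of $x$ along a perfect elimination ordering (PEO), clip the result with a greedy alteration enforcing the clique constraint, and then derandomize via the method of conditional expectations. Let $v_1, \ldots, v_n$ be a PEO, set $\alpha := k^{-1/3}$ and $y_v := (1-\alpha) x_v$, and sample $X_v \sim \mathrm{Bernoulli}(y_v)$ independently, producing $S_0 := \{v : X_v = 1\}$. Then scan vertices in PEO order and add $v_i$ to $S \subseteq S_0$ iff $v_i \in S_0$ and $|N^{left}(v_i) \cap S| \leq k - 1$. Since $C_i := \{v_i\} \cup N^{left}(v_i) \subseteq \{v_1, \ldots, v_i\}$ by the PEO property, this rule directly forces $|C_i \cap S| \leq k$ at step $i$; a short induction using the fact that $N^{left}(v_i)$ is a clique shows moreover that $|N^{left}(v_i) \cap S| \leq k$ invariantly, so the rule is never thwarted by a pre-existing violation. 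Since the $C_i$'s exhaust all maximal cliques of $G$, $G[S]$ is $k$-colorable.

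For the weight analysis, $X_{v_i}$ is independent of $\{X_{v_j} : j < i\}$, so
\[
\Pr[v_i \in S] = y_{v_i} \cdot \Pr\bigl[|N^{left}(v_i) \cap S| \leq k - 1\bigr] \geq y_{v_i}\bigl(1 - \Pr\bigl[|N^{left}(v_i) \cap S_0| \geq k\bigr]\bigr),
\]
where the inequality uses $S \subseteq S_0$. The variable $|N^{left}(v_i) \cap S_0|$ is a sum of independent Bernoullis with mean $\mu \leq (1-\alpha)\,x(N^{left}(v_i)) \leq (1-\alpha) k$ (by LP feasibility) and variance at most $\mu$. Chebyshev's inequality then gives $\Pr[|N^{left}(v_i) \cap S_0| \geq k] \leq \mu/(k-\mu)^2 \leq k/(\alpha k)^2 = 1/(\alpha^2 k) = k^{-1/3}$, so $E[\sum_v w_v \mathbb{1}[v \in S]] \geq (1-\alpha)(1 - k^{-1/3}) \sum_v w_v x_v \geq (1 - 2 k^{-1/3}) \sum_v w_v x_v$, which is the claimed bound. (For $k$ so small that $2k^{-1/3} \geq 1$ the lemma is vacuous.)

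To make this deterministic, I would use the Chebyshev tail bound itself as a pessimistic estimator: it depends only on the conditional mean and variance of each $|N^{left}(v_i) \cap S_0|$, both of which are linear functions of the still-undetermined $y_u$'s, so the estimator is polynomial-time computable. Sequentially fixing each $X_{v_i}$ to the value preserving the estimator then yields a deterministic $S$ attaining the same bound. I expect the main technical step to be the Chebyshev calculation: the exponent $-1/3$ arises from balancing the two loss terms $\alpha$ (from the scaling) and $1/(\alpha^2 k)$ (from the tail bound), so any weaker concentration inequality or omitted scaling step would fail to yield the claimed $2/k^{1/3}$ factor.
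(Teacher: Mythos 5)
Your proposal is correct and follows essentially the same route as the paper: round each vertex independently with probability $(1-k^{-1/3})x_v$, then scan in perfect elimination order and keep a vertex only if its surviving left-neighbourhood has size at most $k-1$, bounding the failure probability by Chebyshev applied to the sampled (pre-alteration) left-neighbourhood, which gives exactly the $(1-k^{-1/3})(1-k^{-1/3})\geq 1-2k^{-1/3}$ factor. The only cosmetic difference is the derandomization: the paper observes the analysis needs only pairwise independence and uses a small pairwise-independent sample space, whereas you sketch a pessimistic-estimator/conditional-expectations argument; both are standard and workable.
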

Theorem \ref{ptas} then follows easily. If $k \leq 8/\epsilon^3$, we use the algorithm from \cite{kColorBrute} which runs in polynomial time since $k$ is bounded by a constant. Otherwise, we run our LP rounding procedure.

~

\noindent
{\bf Linear Programming Techniques for \msc}\\
We give a general framework for turning approximations for weighted \mkcs into approximations for \msc.
\begin{definition}
We say that an algorithm for weighted \mkcs is a $(\rho, \gamma)$ approximation if it always returns a $\gamma \cdot k$ colorable subgraph with vertex weight at least $\rho \cdot OPT$, where $OPT$ is the maximum vertex weight of any $k$-colorable subgraph.
\end{definition}
For Theorem \ref{scapprox}, we only need to consider the case $\rho = 1-\epsilon$ and $\gamma = 1$. Still, we consider this more general concept since it is not any harder to describe and may have other applications.

We prove the following, where $e$ denotes the base of the natural logarithm.
\begin{lemma}\label{lem:lp}
    Suppose there is a $(\rho, \gamma)$ approximation for weighted \mkcs on some class of graphs. Then, for any $1 < c < \min(e^2, \frac{1}{1-\rho})$, there is a $\frac{\rho \cdot \gamma \cdot (c + 1)}{2 \cdot (1- (1-\rho)\cdot c)\cdot \ln c}$-approximation for \msc for graphs in the same graph class.
\end{lemma}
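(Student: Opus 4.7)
The plan is to adapt the Halld\'orsson--Kortsarz--Shachnai concatenation approach---which uses an exact \mkcs oracle on interval graphs---to work with an approximate $(\rho, \gamma)$-oracle. Choose a geometric sequence $k_t = \lceil c^t \rceil$ for $t = 0, 1, \ldots, T$ with $T = \lceil \log_c n \rceil$, so $k_T \geq n \geq \chi(G)$. At each level $t$, invoke the $(\rho, \gamma)$-approximation to obtain a $\gamma k_t$-colorable subgraph $S_t$ of weight at least $\rho \cdot OPT_{k_t}$, where $OPT_k$ denotes the maximum weight of any $k$-colorable subgraph of $G$. Build the coloring $\phi$ by concatenation: $S_0$ receives colors $\{1, \ldots, \gamma k_0\}$, $S_1 \setminus S_0$ receives $\{\gamma k_0 + 1, \ldots, \gamma(k_0 + k_1)\}$, and so on, with a cleanup block handling any vertices missed by every $S_t$. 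Since distinct blocks use disjoint color sets and each $S_t$ is $\gamma k_t$-colorable, $\phi$ is a proper coloring.

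For the analysis, I would use the identity that for any proper coloring its cost equals $\sum_{k \geq 0}(W - f(k))$, where $W := \sum_v w_v$ and $f(k) := w(\{v : \phi(v) \leq k\})$. Applied to the optimal \msc coloring $\phi^*$, one has $f^*(k) \leq OPT_k$ because $\{v : \phi^*(v) \leq k\}$ is $k$-colorable, so the \msc optimum is at least $\sum_{k \geq 0}(W - OPT_k)$; tracking level-wise coverage for our $\phi$ gives $ALG \leq \gamma k_0 W + \sum_{t \geq 1} \gamma k_t(W - \rho \cdot OPT_{k_{t-1}})$. The linear programming techniques alluded to in the paper enter precisely at this comparison step: a naive term-by-term bound fails for $\rho < 1$ because the pointwise ratio $(W - \rho x)/(W - x)$ diverges as $x \to W$, and formulating \msc as an LP (with variables for fractional color assignments and constraints derived from the same structure that supports the \mkcs oracle) provides a refined lower bound that matches the level decomposition coherently.

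The ratio then reduces by the scale invariance $k \mapsto c \cdot k$ of the geometric progression---substituting $k = c^s$ turns both sums into geometric series with common ratio, and after a short calculation yields the factor $\frac{\rho \gamma (c + 1)}{2(1 - (1 - \rho) c) \ln c}$. The main obstacle is the persistent slack $W - \rho \cdot OPT_k \geq (1 - \rho) W$ when $\rho < 1$: if $c$ is too large, the tail $\sum_t \gamma k_t (1 - \rho) W$ diverges exponentially, which is precisely why the hypothesis $c < 1/(1-\rho)$ (equivalently, $(1 - \rho) c < 1$) is needed. The secondary constraint $c < e^2$ arises from optimizing $(c + 1)/\ln c$ against the OPT bound, and the remaining work is to absorb the startup term $\gamma k_0 W$ and the cleanup block cleanly into the final ratio.
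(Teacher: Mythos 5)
Your proposal correctly identifies the overall shape (a geometric sequence of \mkcs calls, concatenation, and a recursion that explains the constraint $c < 1/(1-\rho)$), but the step you label as where ``linear programming techniques enter'' is the entire content of the lemma, and you leave it unspecified. The paper does not use an LP merely to refine the lower bound $OPT \geq \sum_{k}(W - OPT_k)$: it sets up a time-indexed \emph{configuration} LP with a variable $z_{C,k}$ for every $k$-colorable subgraph $C$ and every color budget $k$, shows that the $(\rho,\gamma)$-oracle yields an approximate separation oracle for the dual (Lemma~\ref{lem:sep}), solves the relaxed primal ${\bf LP}^{(\rho,\gamma)}$ via ellipsoid, binary search, and a Farkas argument (Lemma~\ref{lem:apxsolve}), and then at level $j$ \emph{samples} a set $C$ from the distribution $\{\rho\cdot z_{C,k_j'}\}$ rather than calling the \mkcs oracle on a residual graph. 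The per-vertex recursion $p_{v,j}\le \rho(1-y_{v,j})+(1-\rho)p_{v,j-1}$ is then driven by the LP coverage values $y_{v,j}$, which is what lets the $(1-\rho)$ error be charged against the fractional color $col_v$ of each vertex separately. Your sketch never says what LP you solve, how you can solve it given only an approximate oracle (this itself requires a nontrivial argument), or how its solution interacts with the concatenation, so the crux is missing.

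Two further concrete gaps mean your construction cannot achieve the stated constant even granting the comparison step. First, the $\ln c$ in the denominator comes from a \emph{random offset}: the paper uses $k_j = h\cdot c^j$ with $h = c^\Gamma$, $\Gamma$ uniform on $[0,1)$, so that the expected first threshold exceeding $k$ is $\frac{c-1}{\ln c}\cdot k$; with your deterministic $k_t = \lceil c^t\rceil$ the rounding-up loss is a factor of $c$ in the worst case, not $\frac{c-1}{\ln c}$. Second, the factor of $2$ in the denominator comes from randomly permuting the $\gamma k_j'$ color classes within each block, so that a vertex colored in block $j$ has expected color $\gamma(k_0+\cdots+k_{j-1}+(k_j+1)/2)$; plain concatenation only yields the upper endpoint of the block and loses this factor. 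Relatedly, $c<e^2$ is not an artifact of ``optimizing $(c+1)/\ln c$'': it is needed so that the leftover additive term $\gamma\left(\tfrac12-\tfrac{1}{\ln c}\right)$, arising from the $+1$ in $(k_j+1)/2$, is nonpositive and can be dropped.
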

Our main result follows by taking $\gamma = 1$ and $\rho = 1-\epsilon$. For small enough $\epsilon$, we then choose $c^* \approx 3.591$ to minimize the expression, resulting in
an approximation guarantee of at most $1.796$.

Roughly speaking, we prove Lemma \ref{lem:lp} by considering a time-indexed configuration LP relaxation for latency-style problems. Configuration LPs have been considered for \msc in other graph classes, such as line graphs \cite{LineGraphSC}. The configurations used in previous work have variables for each independent set. We use a stronger LP that has variables for each $k$-colorable subgraph for each $1 \leq k \leq n$. 

Our configuration LP was inspired by one introduced by Chakrabarty and Swamy for the \textsc{Minimum Latency Problem} (a variant of the \textsc{Travelling Salesperson Problem}) \cite{ChakrabartySwamyMinLat}, but is tailored for our setting. For each ``time'' $k \geq 1$ we have a family of variables, one for each $k$-colorable subgraph, indicating if this is the set of nodes that should be colored with integers $\leq k$. This LP can be solved approximately using the $(\rho, \gamma)$-approximation for \mkcs, and it can be rounded in a manner inspired by \cite{ChakrabartySwamyMinLat,PostSwamyMinLat}.

Note that Theorem \ref{ptas} describes a $(1-\epsilon, 1)$-approximation for \mkcs for any constant $\epsilon > 0$. If we had a $(1, 1+\epsilon)$-approximation then the techniques in \cite{IntervalSC} could be easily adapted to prove Theorem \ref{scapprox}. But these techniques don't seem to apply when given \mkcs approximations that are inexact on the number of nodes included in the solution.

\section{An LP-Based Approximation Algorithm for \msc} \label{sec:lp}
As mentioned earlier, our approach is inspired by a time-indexed LP relaxation for latency problems introduced by Chakrabarty and Swamy \cite{ChakrabartySwamyMinLat}. Our analysis follows ideas presented by Post and Swamy
who, among other things, give a 3.591-approximation for the \textsc{Minimum Latency Problem} \cite{PostSwamyMinLat} using a configuration LP.

\subsection{The Configuration LP} 

For a value $k \geq 0$ (perhaps non-integer), $\mathcal{C}_k$ denotes the vertex subsets $S \subseteq V$ such that $G[S]$ can be colored using at most $k$ colors.
For integers $1 \leq k \leq n$ and each $C \in \mathcal{C}_k$, we introduce a variable $z_{C, k}$ that indicates if $C$ is the set of nodes colored with the first $k$ integers.
We also use variables $x_{v, k}$ to indicate vertex $v$ should receive color $k$. We only need to consider $n$ different colors since no color will be ``skipped'' in an optimal solution.

\begin{alignat}{3}
\text{\bf minimize:} & \quad & \sum_{v \in V} \sum_{k=1}^n w_v \cdot k \cdot x_{v, k} \tag{{\bf LP-MSC}} \label{LP1} \\
\text{\bf subject to:} && \sum_{k=1}^n x_{v, k} = \quad & 1 \quad && ~\forall~v \in V \label{primal1} \\
&& \sum_{C \in \mathcal{C}_k} z_{C, k} \leq \quad & 1 && ~\forall~ 1 \leq k \leq n \label{primal2}  \\
&& \sum_{C \in \mathcal{C}_k : v \in C} z_{C, k} \geq \quad & \sum_{k' \leq k} x_{v, k'} && ~\forall~ v \in V, 1 \leq k \leq n \label{primal3} \\
&& x, z \geq \quad & 0 \notag
\end{alignat}

Constraint \eqref{primal1} says each vertex should receive one color, constraint \eqref{primal2} ensures we pick just one subset of vertices to use the first $k$ colors on, and 
constraint \eqref{primal3} enforces that each vertex colored by a value less than or equal to $k$ must be in the set we use the first $k$ colors on.

Recall that this work is not the first time a configuration LP has been used for \msc. In \cite{LineGraphSC}, the authors consider one that has a variable $x_{C,k}$ for every independent set $C$, where the variable models that $C$ is the independent set used for color $t$. Our approach allows us to prove better bounds via LP rounding, but it has the stronger requirement that in order to (approximately) solve our LP, one needs to (approximately) solve the \mkcs problem, rather than just the maximum independent set problem.

Let $OPT$ denote the optimal \msc cost of the given graph and $OPT_{LP}$ denote the optimal cost of \eqref{LP1}. Then $OPT_{LP} \leq OPT$ simply because the natural $\{0,1\}$ solution corresponding to $OPT$ is feasible for this LP.

At a high level, we give a method to solve this LP approximately by using the algorithm for \mkcs to approximately separate the constraints of the dual LP, which is given as follows.
\begin{alignat}{3}
\text{\bf maximize:} & \quad & \sum_{v \in V} \alpha_v - \sum_{k=1}^n \beta_k \tag{{\bf DUAL-MSC}} \label{LD1} \\
\text{\bf subject to:} && \alpha_v \leq \quad & w_v \cdot k + \sum_{\hat{k} = k}^n \theta_{v, \hat{k}} \quad && ~\forall~ v \in V, 1 \leq k \leq n \label{dual1}  \\
&& \sum_{v \in C} \theta_{v, k} \leq \quad & \beta_k && ~\forall~ 1 \leq k \leq n, C \in \mathcal{C}_k \label{dual2} \\
&& \beta, \theta \geq \quad & 0 \label{dual3}
\end{alignat}

Note \eqref{LD1} has polynomially-many variables. We approximately separate the constraints in the following way.
For values $\nu \geq 0, \rho \leq 1, \gamma \geq 1$, let $\mathcal D(\nu; \rho, \gamma)$ denote the following polytope:
\begin{equation}\label{eqn:dual}
\left\{(\alpha, \beta, \theta) : \eqref{dual1}, \eqref{dual3}, \sum_{v \in C} \theta_{v,k} \leq \beta_k ~\forall~1 \leq k \leq n ~\forall~ C \in \mathcal C_{\gamma \cdot k}, \sum_v \alpha_v - \frac1\rho \cdot \sum_k \beta_k \geq \nu \right\}
\end{equation}

\begin{lemma}\label{lem:sep}
If there is a $(\rho,\gamma)$-approximation for \mkcs, there is also a polynomial-time algorithm $\mathcal A$ that takes a single value $\nu$ plus values $(\alpha, \beta, \theta)$ for the variables of \eqref{LD1} and always returns one of two things:
\begin{itemize}
\item A (correct) declaration that $(\alpha, \beta/\rho, \theta) \in \mathcal D(\nu; 1, 1)$.
\item A constraint from $\mathcal D(\nu; \rho, \gamma)$ that is violated by $(\alpha, \beta, \theta)$.
\end{itemize}
\end{lemma}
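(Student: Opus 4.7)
The plan is to build an approximate separation oracle for $\mathcal{D}(\nu;\rho,\gamma)$ by inspecting its constraints family by family, using the \mkcs approximation to handle the exponentially many configuration constraints.

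First I would check every constraint of $\mathcal{D}(\nu;\rho,\gamma)$ that does not involve a configuration $C$: the $O(n^2)$ inequalities \eqref{dual1}, the nonnegativity constraints \eqref{dual3}, and the single aggregate constraint $\sum_v \alpha_v - \tfrac{1}{\rho}\sum_k \beta_k \geq \nu$. All of these are polynomial in number and can be evaluated directly; if any is violated by $(\alpha,\beta,\theta)$, I return it as a legitimate violated constraint of $\mathcal{D}(\nu;\rho,\gamma)$. The key observation is that under the substitution $\beta \mapsto \beta/\rho$ these inequalities coincide with the corresponding constraints of $\mathcal{D}(\nu;1,1)$ evaluated at $(\alpha,\beta/\rho,\theta)$: \eqref{dual1} does not involve $\beta$, \eqref{dual3} is preserved since $\rho>0$, and the value constraint $\sum_v \alpha_v - \sum_k \beta'_k \geq \nu$ of $\mathcal{D}(\nu;1,1)$ evaluated at $\beta' = \beta/\rho$ is exactly $\sum_v \alpha_v - \tfrac{1}{\rho}\sum_k \beta_k \geq \nu$. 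Hence passing this step is equivalent to $(\alpha,\beta/\rho,\theta)$ satisfying every non-configuration constraint of $\mathcal{D}(\nu;1,1)$.

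Next, for each $k \in \{1,\ldots,n\}$, I would invoke the assumed $(\rho,\gamma)$-approximation for weighted \mkcs on $G$ with vertex weights $\theta_{v,k}$ and parameter $k$, obtaining a set $C^*_k \in \mathcal{C}_{\gamma k}$ with
\[ \sum_{v \in C^*_k} \theta_{v,k} \geq \rho \cdot \max_{C \in \mathcal{C}_k} \sum_{v \in C} \theta_{v,k}. \]
If some $C^*_k$ satisfies $\sum_{v \in C^*_k} \theta_{v,k} > \beta_k$, then because $C^*_k \in \mathcal{C}_{\gamma k}$ this inequality is a violated configuration constraint of $\mathcal{D}(\nu;\rho,\gamma)$ and I return it. Otherwise $\rho \cdot \max_{C \in \mathcal{C}_k}\sum_{v \in C}\theta_{v,k} \leq \beta_k$ for every $k$, so $\sum_{v \in C}\theta_{v,k} \leq \beta_k/\rho$ for every $k$ and every $C \in \mathcal{C}_k$; together with the previous checks this certifies $(\alpha,\beta/\rho,\theta) \in \mathcal{D}(\nu;1,1)$, which I declare. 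The total runtime is polynomial: $n$ calls to the approximation algorithm plus inspection of polynomially many explicit constraints.

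No step is genuinely difficult; the only subtlety is that the witness produced by the approximate oracle lies in $\mathcal{C}_{\gamma k}$ rather than $\mathcal{C}_k$, which is exactly why the returned violated constraint must come from the relaxed polytope $\mathcal{D}(\nu;\rho,\gamma)$ and why the factor $1/\rho$ appears when certifying membership in the tighter polytope $\mathcal{D}(\nu;1,1)$. This mismatch between the oracle's approximate answer and the exact family of constraints is the bilateral loss the rest of the algorithm must absorb.
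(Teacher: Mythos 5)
Your proposal is correct and follows the same route as the paper's proof: check the polynomially many non-configuration constraints directly, then for each $k$ run the $(\rho,\gamma)$-approximation on weights $\theta_{\cdot,k}$, returning the witness in $\mathcal C_{\gamma k}$ as a violated constraint of $\mathcal D(\nu;\rho,\gamma)$ if its weight exceeds $\beta_k$, and otherwise concluding every $k$-colorable subgraph has weight at most $\beta_k/\rho$ so that $(\alpha,\beta/\rho,\theta)\in\mathcal D(\nu;1,1)$. Your write-up is in fact slightly more explicit than the paper's about why the substitution $\beta\mapsto\beta/\rho$ aligns the non-configuration constraints of the two polytopes.
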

\begin{proof}

First, check that \eqref{dual1}, \eqref{dual3}, and $\sum_v \alpha_v - \frac{1}{\rho} \cdot \sum_k \beta_k \geq \nu$ hold. If not, we already found a violated constraint.
Otherwise, for each $k$ we then run the \mkcs $(\rho, \gamma)$-approximation on the instance with vertex weights $\theta_{v,k}, v \in V$. If this finds a solution (i.e. a $(k \cdot \gamma)$-colorable subgraph) with weight exceeding $\beta_k$, we return the corresponding violated constraint. Otherwise, we know that the maximum possible weight of a $k$-colorable subgraph is at most $\beta_k/\rho$. If the latter holds for all $k$, then $(\alpha, \beta/\rho, \theta) \in \mathcal D(\nu; 1, 1)$.
\end{proof}

Lemma 3.3 from \cite{ChakrabartySwamyMinLat} takes such a routine and turns it into an approximate LP solver. The following is proven in the exact same manner where we
let ${\bf LP}^{(\rho,\gamma)}$ be the same as \eqref{LP1}, except $\mathcal C_k$ is replaced by $\mathcal C_{\gamma \cdot k}$ in both \eqref{primal2} and \eqref{primal3} and the right-hand side of \eqref{primal2} is replaced by $1/\rho$.

\begin{lemma} \label{lem:apxsolve}
For fixed constant rational values $\rho \geq 1, \gamma \leq 1$, given a $(\rho, \gamma)$-approximation for \mkcs, we can find a feasible solution $(x,z$) to ${\bf LP}^{(\rho,\gamma)}$ with cost at most $OPT_{LP}$ in polynomial time.
\end{lemma}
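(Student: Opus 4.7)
The plan is to apply the ellipsoid method to the dual of ${\bf LP}^{(\rho,\gamma)}$, using Lemma \ref{lem:sep} as an approximate separation oracle, in the style of Chakrabarty and Swamy \cite{ChakrabartySwamyMinLat}. The dual of ${\bf LP}^{(\rho,\gamma)}$ maximizes $\sum_v \alpha_v - \frac{1}{\rho}\sum_k \beta_k$ subject to \eqref{dual1}, \eqref{dual3}, and $\sum_{v \in C} \theta_{v,k} \leq \beta_k$ for all $1 \leq k \leq n$ and $C \in \mathcal{C}_{\gamma k}$; its feasible slice with objective $\geq \nu$ is exactly $\mathcal{D}(\nu; \rho, \gamma)$. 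Since $\rho \leq 1$ and $\gamma \geq 1$, so that $\mathcal{C}_k \subseteq \mathcal{C}_{\gamma k}$ and \eqref{primal2}'s right-hand side is loosened from $1$ to $1/\rho$, any feasible solution of \eqref{LP1} is feasible for ${\bf LP}^{(\rho,\gamma)}$ with the same cost. It therefore suffices to solve ${\bf LP}^{(\rho,\gamma)}$ exactly, since its optimum is automatically at most $OPT_{LP}$.

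First I would binary-search over the objective threshold $\nu$ with polynomial bit-precision, using the standard bound on the bit-complexity of extreme optima of rational LPs. For each candidate $\nu$, I run the ellipsoid method on $\mathcal{D}(\nu; \rho, \gamma)$, driving it with Lemma \ref{lem:sep}: each call either returns a genuine constraint of ${\bf LP}^{(\rho,\gamma)}$'s dual that is violated (feed it to the ellipsoid as a cut), or it declares $(\alpha, \beta/\rho, \theta) \in \mathcal{D}(\nu; 1, 1)$ -- which is a feasible point of the original dual \eqref{LD1} with objective $\geq \nu$ and therefore, by weak duality on \eqref{LP1}, certifies $\nu \leq OPT_{LP}$. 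Stop the inner loop on either a declaration or on volume collapse. Let $\nu^*$ be the largest $\nu$ on which the inner loop terminates via a declaration, and let $\Pi$ be the polynomially many pairs $(C, k)$ with $C \in \mathcal{C}_{\gamma k}$ whose constraints appeared as cuts across the whole search.

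Next I would form the primal of ${\bf LP}^{(\rho,\gamma)}$ restricted to the variables $\{z_{C, k} : (C, k) \in \Pi\}$, setting all other configuration variables to zero, and solve it exactly in polynomial time. Its dual is ${\bf LP}^{(\rho,\gamma)}$'s dual restricted to the cuts in $\Pi$; the ellipsoid's volume-collapse certificate at $\nu^* + \delta$ bounds this restricted dual's optimum by $\nu^* + \delta$, so by LP duality on the restricted pair the restricted primal has optimum at most $\nu^* + \delta \leq OPT_{LP} + \delta$. Rounding $\delta$ down using the polynomial bit-complexity of rational LP optima yields cost at most $OPT_{LP}$, and the resulting solution is trivially feasible for the unrestricted ${\bf LP}^{(\rho,\gamma)}$.

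The main obstacle is the bookkeeping around the two branches of Lemma \ref{lem:sep}: a cut advances ${\bf LP}^{(\rho,\gamma)}$'s dual, but a declaration only witnesses feasibility for \eqref{LD1} (the original dual, not the dual of ${\bf LP}^{(\rho,\gamma)}$). The $1/\rho$ factor on $\beta_k$ in the latter's objective is precisely what reconciles these two threads, and tracing this through the ellipsoid's termination condition is the delicate part of the argument -- which is exactly the content of Lemma 3.3 of \cite{ChakrabartySwamyMinLat} that the paper appeals to.
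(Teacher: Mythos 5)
Your proposal is correct and follows essentially the same route as the paper: a binary search over the dual objective threshold, the ellipsoid method driven by the approximate separation oracle of Lemma \ref{lem:sep} (declarations raise the lower end, infeasibility certificates lower the upper end), solving the primal restricted to the variables corresponding to the generated cuts, and a bit-complexity argument on denominators of extreme points to bring the cost from $\nu^*+\delta \leq OPT_{LP}+\delta$ down to $OPT_{LP}$. The only cosmetic difference is that you invoke LP duality on the restricted primal--dual pair where the paper applies Farkas' lemma directly, which amounts to the same argument.
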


\begin{proof}
Our proof is nearly identical to that in \cite{ChakrabartySwamyMinLat} (see Section 3 in their paper). For completeness, we provide our own argument.

First, we trivially know $\mathcal D(0; 1; 1) \neq \emptyset$. 
We also know that the optimal primal solution has value at most $1 + n \cdot \sum_{v} w_v$, and so by weak LP duality, we have that ${\mathcal D(1 + n \cdot \sum_v w_v; \rho, \gamma) = \emptyset}$.

We consider a binary search over the range $[0, 1 + n \cdot \sum_v w_v]$ and maintain the invariant that for the current search window $[\ell, \mu]$ that $\mathcal D(\ell; 1, 1) \neq \emptyset$ and $\mathcal D(\mu; \rho, \gamma) = \emptyset$. Initially this is true by the above discussion.

Now consider some $\nu$ in the middle of a given range $[\ell, \mu]$. We run the ellipsoid method over variables $(\alpha, \beta, \theta)$ and invoke Lemma \ref{lem:sep} for each such tuple encountered.
If it returns a declaration that $(\alpha, \beta/\rho, \theta) \in \mathcal D(\nu; 1, 1)$, we update the lower end of the binary search range, i.e. $\ell := \nu$. Otherwise, after a polynomial number of iterations the ellipsoid method will generate an infeasible collection of constraints from $\mathcal D(\nu; \rho, \gamma)$ in which case we update the upper end of the binary search range, i.e. $\mu := \nu$.

Eventually the binary search algorithm reduces the range to be simply $[\nu, \nu+\epsilon]$ for some appropriate value $\epsilon$ we will describe soon. Note that $\mathcal D(\nu'; 1, 1) = \emptyset$ for all $\nu' > OPT_{LP}$ so this final range must have $\nu \leq OPT_{LP}$.

Consider the constraints, say $\mathcal H$, that were generated to certify $\mathcal D(\nu + \epsilon; \rho, \gamma) = \emptyset$. These constraints correspond to a polynomial-size set of variables of ${\bf LP}^{(\rho, \gamma)}$. We claim that for an appropriately-small choice of $\epsilon$ that there is a feasible solution to ${\bf LP}^{(\rho, \gamma)}$ using only these variables (i.e. setting all others to 0) with cost at most $OPT_{LP}$. If so, since the corresponding LP has polynomial size then we can find it ourselves by solving the resulting LP.

To see this, first add all of \eqref{dual1}, \eqref{dual3}, and the ``objective function'' constraint $\sum_v \alpha_v - 1/\rho \sum_k \beta_k \geq \nu + \epsilon$ to $\mathcal H$ and notice $\mathcal H$ remains inconsistent (i.e. the inequalities cannot all be satisfied). A routine application of Farkas' lemma on this system of constraints then shows the restriction of ${\bf LP}^{(\rho, \gamma)}$ to just the variables corresponding to constraints in this modified version of $\mathcal H$ has a feasible solution of value at most $\nu + \epsilon$. Thus, ${\bf LP}^{(\rho, \gamma)}$ has an extreme point solution $(x,z)$ with value $\nu'$ at most $\nu + \epsilon \leq OPT_{LP} + \epsilon$. We claim, in fact, that $\nu' \leq OPT_{LP}$ by choosing $\epsilon$ appropriately small.

By standard extreme point analysis, there is some integer $D$ whose bit complexity is polynomial in the input (i.e. in the number of bits used to describe $G, w, \rho, \gamma$) such that all extreme points of ${\bf LP}^{(\rho, \gamma)}$, in particular the value of $(x,z)$, have their objective value having denominator $\leq D$. Similarly there is such an integer $D'$ upper bounding the denominator in the value of any extreme point to \eqref{LP1}, in particular this applies to $OPT_{LP}$.

Thus, if we let $\epsilon = 1/(D \cdot D')$ then the value $\nu'$ of the extreme point solution $(x,z)$ to ${\bf LP}^{(\rho, \gamma)}$ satisfies $\nu' \leq OPT_{LP} + 1/(D \cdot D')$. Assume, by way of contradiction, that $\nu' > OPT_{LP}$. Then $\nu$' and $OPT_{LP}$ are two different values with $|OPT_{LP} - \nu'| < 1/(D \cdot D')$. But the difference between distinct numbers with denominators at most $D$ and $D'$ respectively must be at least $1/(D \cdot D')$, a contradiction. So $\nu' \leq OPT_{LP}$ as required, i.e. $(x,z)$ is the desired solution.
\end{proof}

\subsection{The Rounding Algorithm and Analysis}

The rounding algorithm is much like that in \cite{IntervalSC} in that it samples $k$-colorable subgraphs for various values of $k$ in a geometric sequence and concatenates these colorings to get a coloring of all nodes.
For convenience, let $z_{C, k} = z_{C, \lfloor k \rfloor}$ for any real value $k \geq 0$.

\begin{algorithm}[H]
\caption{MSCRound(G)}\label{alg:msc}
\begin{algorithmic}
\State find a solution $(x, z)$ to ${\bf LP}^{(\rho,\gamma)}$ with value $\leq OPT$ using Lemma \ref{lem:apxsolve}
\State if necessary, increase $z_{\emptyset, k}$ until $\sum_{C \in \mathcal C_{\gamma \cdot k}} z_{C,k} = 1/\rho$ for each $k$
\State let $1 < c < \min(e^2, 1/(1-\rho))$ be a constant we will optimize later
\State let $h = c^{\Gamma}$ be a random offset where $\Gamma$ is sampled uniformly from $[0, 1)$
\State $j \gets 0$
\State $k \gets 0$ \Comment{The next color to use}
\While{$G \neq \emptyset$}
\State $k_j \gets h \cdot c^j$
\State $k'_j \gets \min\{n, \lfloor k_j \rfloor\}$
\State choose $C$ randomly from $\mathcal{C}_{\gamma \cdot k'_j}$ with probability according to the LP values $z_{C', k'_j} \cdot \rho $ for $C' \in \mathcal{C}_{\gamma \cdot k'_j}$
\State color $C$ with $\lfloor \gamma \cdot k'_j\rfloor$ colors, call the color classes $C_1, C_2, \ldots, C_{ \gamma \cdot k'_j}$
\State randomly permute the color classes, let $C'_1, C'_2, \ldots, C'_{ \gamma \cdot k'_j}$ be the reordering
\State finally, assign nodes in $C'_i$ color $k + i$ for each $1 \leq i \leq \lfloor \gamma \cdot k'_j \rfloor$ in the final solution
\State $k \gets k + \lfloor \gamma \cdot k'_j \rfloor$
\State $G \gets G - C$
\State $j \gets j + 1$
\EndWhile
\end{algorithmic}
\end{algorithm}
Note that nodes colored during iteration $j$ get assigned colors at most $\gamma \cdot (k_0 + k_1 + \ldots + k_j)$ and the expected color of such a node is at most $\gamma \cdot (k_0 + k_1 + \ldots + k_{j-1} + (k_j+1)/2)$.
The number of iterations is $O(\log n)$ because each vertex will appear in each $\gamma \cdot n$ coloring, as this is the largest color considered in ${\bf LP}^{(\rho, \gamma)}$.

We note that despite our approach following the main ideas of the algorithm and analysis for minimum latency given in \cite{PostSwamyMinLat}, there are some key details that change. In \cite{PostSwamyMinLat}, each iteration of the algorithm produces a tree, which is then doubled and shortcutted to produce a cycle with cost at most double the tree. While we randomly permute the colors in our coloring, they randomly choose which direction to walk along the cycle. For a tree with cost $k$, this gives an expected distance of $k$ for each node. We save a factor of $2$ because we do not have a doubling step, but our average color is $\frac{k+1}{2}$ as opposed to $\frac{k}{2}$. Some extra work is required in our analysis to account for the extra $\frac{1}{2}$ on each vertex.

Let $p_{v, j}$ be the probability that vertex $v$ is not colored by the end of iteration $j$. For $j < 0$, we use $p_{v,j} = 1$ and $k_j = 0$.
Finally, for $v \in V$, let $\phi(v)$ denote the color assigned to $v$ in the algorithm.

The following is essentially Claim 5.2 in \cite{PostSwamyMinLat}, with some changes based on the differences in our setting as outlined above.
\begin{lemma}\label{lemma:fixed}
For a vertex $v$,
\[{\bf E}[\phi(v) | h] \leq \frac{\gamma}{2} \cdot \frac{c+1}{c-1} \cdot \sum_{j \geq 0} p_{v, j-1} \cdot (k_j - k_{j-1}) + \gamma \cdot\left( \frac{1}{2} - \frac{h}{c-1} \right).\]
\end{lemma}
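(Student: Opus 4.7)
The plan is to start from the inequality remarked upon immediately before the lemma: if a vertex $v$ is colored during iteration $j$, then its expected color (conditioned on $h$ and on being colored in iteration $j$) is at most $A_j := \gamma\bigl(k_0 + k_1 + \cdots + k_{j-1} + (k_j+1)/2\bigr)$. Letting $q_{v,j} := p_{v,j-1} - p_{v,j}$ denote the probability that $v$ is colored exactly in iteration $j$ (with the convention $p_{v,-1}=1$, $k_{-1}=0$), this gives
\[
{\bf E}[\phi(v)\mid h] \;\leq\; \sum_{j \geq 0} q_{v,j}\cdot A_j.
\]

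Next I would apply Abel summation (summation by parts) to this expression, using the fact that $\sum_{j\geq 0} q_{v,j} = 1$ and $p_{v,j}\to 0$ as $j\to\infty$. A short calculation yields
\[
\sum_{j \geq 0} q_{v,j}\cdot A_j \;=\; A_0 + \sum_{j\geq 1} p_{v,j-1}\,(A_j - A_{j-1}),
\]
and from the definition of $A_j$ one checks that $A_0 = \gamma(k_0+1)/2 = \gamma(h+1)/2$ and that $A_j - A_{j-1} = \gamma(k_j+k_{j-1})/2$ for $j\geq 1$. So the bound becomes $\tfrac{\gamma}{2}(h+1) + \tfrac{\gamma}{2}\sum_{j\geq 1} p_{v,j-1}(k_j + k_{j-1})$.

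Now I would exploit the geometric structure $k_j = h\cdot c^j$, which implies the identity $k_j + k_{j-1} = \tfrac{c+1}{c-1}(k_j - k_{j-1})$ for every $j\geq 1$. Using this to rewrite $\sum_{j\geq 1} p_{v,j-1}(k_j+k_{j-1})$ in terms of $\sum_{j\geq 1} p_{v,j-1}(k_j-k_{j-1})$, and then folding in the $j=0$ boundary term by noting that $p_{v,-1}(k_0 - k_{-1}) = h$, one obtains
\[
\sum_{j\geq 1} p_{v,j-1}(k_j+k_{j-1}) \;=\; \frac{c+1}{c-1}\sum_{j\geq 0} p_{v,j-1}(k_j - k_{j-1}) \;-\; \frac{c+1}{c-1}\,h.
\]

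Substituting this back, the coefficient of $\sum_{j\geq 0} p_{v,j-1}(k_j-k_{j-1})$ is exactly $\tfrac{\gamma}{2}\cdot\tfrac{c+1}{c-1}$, as desired, and the remaining $h$-dependent terms combine to give $\tfrac{\gamma}{2}(h+1) - \tfrac{\gamma h(c+1)}{2(c-1)} = \gamma\bigl(\tfrac{1}{2} - \tfrac{h}{c-1}\bigr)$. The main subtlety (and the only place one must be careful) is the handling of the boundary index $j=0$: the extra $+1$ inside $A_j$ coming from the random permutation giving $(k_j+1)/2$ rather than $k_j/2$, together with the $k_0 = h$ that cannot be matched by a geometric-ratio cancellation, is precisely what produces the extra term $\gamma(\tfrac{1}{2} - \tfrac{h}{c-1})$ in the lemma statement; everything else is a clean Abel rearrangement plus the geometric identity.
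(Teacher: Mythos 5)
Your proposal is correct and follows essentially the same route as the paper: both start from the bound $\gamma(k_0+\cdots+k_{j-1}+(k_j+1)/2)$ on the expected color conditioned on being colored in iteration $j$, and both combine an Abel summation (converting the weights $p_{v,j-1}-p_{v,j}$ into $p_{v,j-1}(k_j-k_{j-1})$) with the geometric structure $k_j = h\cdot c^j$. The only difference is the order of operations (the paper simplifies $A_j$ in closed form before summing by parts, whereas you sum by parts first), which yields the identical final expression.
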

\begin{proof}
There are at most $\gamma \cdot k_j$ colors introduced in iteration $j$. They are permuted randomly, so any vertex colored in iteration $j$ has color, in expectation, at most $\gamma \cdot (k_j+1)/2$ more than all colors used in previous iterations.
That is, the expected color of $v$ if colored in iteration $j$ is at most
\begin{eqnarray*} \gamma\cdot \left(k_0 + k_1 + \ldots + k_{j-1} + \frac{k_j+1}{2}\right) & \leq &
\gamma\cdot \left( h \cdot \left(\frac{c^j-1}{c-1} + \frac{c^j}{2} \right) + \frac{1}{2} \right) \\
& = & \gamma\cdot \left( \frac{k_j}{2} \cdot \frac{c+1}{c-1} + \frac{1}{2} - \frac{h}{c-1} \right),
\end{eqnarray*}
where we have used $k_i = h \cdot c^i$ and summed a geometric sequence.

The probability $v$ is colored in iteration $j$ is $p_{v,j-1} - p_{v,j}$, so the expected color of $v$ is bounded by
\[ \frac{\gamma}{2} \cdot \frac{c+1}{c-1} \cdot \left( \sum_{j \geq 0}  (p_{v,j-1} - p_{v,j}) \cdot k_j \right) + \gamma \cdot\left( \frac{1}{2} - \frac{h}{c-1} \right). \]
By rearranging, this is what we wanted to show.
\end{proof}

For brevity, let $y_{v,j} = \sum_{k \leq k_j} x_{v,k}$ denote the LP coverage for $v$ up to color $k_j$.
The next lemma is essentially Claim 5.3 from \cite{PostSwamyMinLat}, but the dependence on $\rho$ is better in our context\footnote{We note \cite{PostSwamyMinLat} does have a similar calculation in a single-vehicle setting of their problem whose dependence is more like that in Lemma \ref{lemma:bound}. They just don't have a specific claim summarizing this calculation that we can reference.}.
\begin{lemma}\label{lemma:bound}
For any $v \in V$ and $j \geq 0$, we have $p_{v, j} \leq (1-y_{v,j}) \cdot \rho + (1 - \rho) \cdot p_{v, j-1}$.
\end{lemma}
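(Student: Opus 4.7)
The plan is to bound the per-iteration failure probability using the LP coverage constraint, invoke independence across iterations to obtain a one-step recursion for $p_{v,j}$, and close with a short algebraic rearrangement to recover the stated form.

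First, I would argue that in iteration $j$ the probability that the sampled set $C$ contains $v$ is at least $\rho \cdot y_{v,j}$. After the adjustment performed in Algorithm \ref{alg:msc}, the values $\{\rho \cdot z_{C, k'_j} : C \in \mathcal C_{\gamma \cdot k'_j}\}$ sum to $1$ and hence form a genuine distribution over the sampled $C$. Applying constraint \eqref{primal3} (in its $\mathcal C_{\gamma k}$ variant from ${\bf LP}^{(\rho,\gamma)}$) at $k = k'_j$ yields
\[
\sum_{C \in \mathcal C_{\gamma \cdot k'_j} : v \in C} z_{C, k'_j} \ \geq\ \sum_{k \leq k'_j} x_{v,k} \ =\ y_{v,j},
\]
where the final equality holds because the integer colors $k \leq k'_j = \lfloor k_j \rfloor$ coincide with the integer colors $k \leq k_j$.

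Second, since the random choices made across iterations are independent, and $v$ is colored by the end of iteration $j$ iff it lies in the set sampled in some iteration $i \leq j$, I would conclude
\[
p_{v,j} \ =\ p_{v,j-1} \cdot \Pr[v \notin C \text{ sampled in iteration } j] \ \leq\ p_{v,j-1} \cdot (1 - \rho \cdot y_{v,j}).
\]

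Finally, the stated inequality follows from a routine rearrangement: the difference
\[
\bigl((1 - y_{v,j})\rho + (1-\rho) p_{v,j-1}\bigr) - p_{v,j-1}(1 - \rho \cdot y_{v,j}) \ =\ \rho \cdot (1 - y_{v,j})(1 - p_{v,j-1})
\]
is non-negative since $\rho \in [0,1]$, $y_{v,j} \leq 1$ (by \eqref{primal1}), and $p_{v,j-1} \leq 1$. I do not anticipate any real obstacle here; the only points deserving care are confirming that the sampling probabilities normalize to $1$ after the adjustment step and that the $\gamma$-scaled variant of \eqref{primal3} correctly delivers the lower bound $y_{v,j}$ on the right-hand side, both of which are immediate from the algorithm and LP setup.
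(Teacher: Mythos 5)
Your proof is correct and follows essentially the same route as the paper's: bound the probability $v$ is missed in iteration $j$ by $1-\rho\cdot y_{v,j}$ using the normalization of the $z$-values and constraint \eqref{primal3}, multiply by $p_{v,j-1}$ via independence, and rearrange using $p_{v,j-1}\leq 1$ and $y_{v,j}\leq 1$. The only cosmetic difference is that the paper drops the factor $p_{v,j-1}$ from the term $p_{v,j-1}\cdot\rho\cdot(1-y_{v,j})$ directly, whereas you verify the equivalent non-negativity of $\rho(1-y_{v,j})(1-p_{v,j-1})$.
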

\begin{proof}
If $v$ is not covered by iteration $j$, then it is not covered in iteration $j$ itself and it is not covered by iteration $j-1$,
which happens with probability
\begin{eqnarray*}
p_{v,j-1} \cdot \left(1 - \sum_{C \in \mathcal C_{\gamma \cdot k_j} : v \in C} \rho \cdot z_{C, k_j}\right) & \leq & p_{v,j-1} \cdot (1-\rho \cdot y_{v,j}) \\
& = & p_{v,j-1} \cdot \rho \cdot (1-y_{v,j}) + p_{v,j-1} \cdot (1-\rho).
\end{eqnarray*}
Note that the first inequality follows from constraint \eqref{primal3} and the definition of $y_{v, j}$. The lemma then follows by using $p_{v,j-1} \leq 1$ and $y_{v,j} \leq 1$ to justify dropping $p_{v,j-1}$ from the first term.
\end{proof}

From these lemmas, we can complete our analysis. Here, for $v \in V$, we let $col_v = \sum_{k = 1}^n k \cdot x_{v,k}$ denote the fractional color of $v$, so the cost of $(x,z)$ is $\sum_{v \in V} w_v \cdot col_v$.
The following lemma is essentially Lemma 5.4 in \cite{PostSwamyMinLat} but with our specific calculations from the previous lemmas.

\begin{lemma}
For any $v \in V$, we have ${\bf E}[\phi(v)] \leq \frac{\rho \cdot \gamma \cdot (c + 1)}{2 \cdot (1- (1-\rho)\cdot c)\cdot \ln c} \cdot col_v$.
\end{lemma}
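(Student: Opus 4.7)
\medskip

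\noindent \textbf{Proof plan.} The plan is to take the expectation of the bound in Lemma~\ref{lemma:fixed} over the random offset $h = c^\Gamma$ with $\Gamma$ uniform on $[0,1)$, unfold Lemma~\ref{lemma:bound} recursively into an explicit series in the LP coverages $y_{v,i}$, and then convert the resulting geometric sum over iterations into a continuous integral that can be related to $col_v$. This mirrors the structure of the single-vehicle analysis in \cite{PostSwamyMinLat}, with the main difference being that our sharper recurrence in Lemma~\ref{lemma:bound} introduces a factor $\rho$ in the numerator and a $1-(1-\rho)c$ in the denominator, while the $\frac{c+1}{2}$ (rather than the $\frac{c}{2}$ one would get in minimum latency) reflects the extra $+\tfrac12$ that arose from averaging the permuted color classes.

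\medskip

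\noindent First I would iterate Lemma~\ref{lemma:bound} to obtain, for $j \geq 0$,
\[ p_{v,j} \;\leq\; \rho \sum_{i=0}^{j}(1-\rho)^{j-i}(1-y_{v,i}) + (1-\rho)^{j+1}, \]
and substitute this into $\sum_{j\geq 0} p_{v,j-1}(k_j - k_{j-1})$ from Lemma~\ref{lemma:fixed}. Using $k_j - k_{j-1} = h\,c^{j-1}(c-1)$ for $j\geq 1$ and $k_0 = h$, I would swap the order of summation and evaluate the resulting inner geometric series in $(1-\rho)c$; convergence is guaranteed by the hypothesis $c < 1/(1-\rho)$. After cancellation, the bound simplifies to
\[ \sum_{j\geq 0} p_{v,j-1}(k_j - k_{j-1}) \;\leq\; \frac{\rho h (c-1)}{1-(1-\rho)c}\sum_{i\geq 0} c^{i}(1-y_{v,i}) \;+\; \frac{\rho h}{1-(1-\rho)c}. \]

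\medskip

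\noindent The next step is to take expectation over $\Gamma$. A direct calculation gives $\mathbf{E}_\Gamma[h] = (c-1)/\ln c$, which handles the trailing additive term as well as the $-h/(c-1)$ term from Lemma~\ref{lemma:fixed} (the latter contributing $-\gamma/\ln c$). The crucial computation is for the main sum: because $h\cdot c^{i} = c^{\Gamma + i}$, and as $i$ ranges over $\mathbb{Z}_{\geq 0}$ with $\Gamma$ uniform on $[0,1)$ the variable $s = i+\Gamma$ is Lebesgue-uniform on $[0,\infty)$, the substitution $u = c^{s}$ converts $\mathbf{E}_\Gamma\bigl[\sum_i c^i h\,(1-y_{v,i})\bigr]$ into $\frac{1}{\ln c}\int_1^\infty (1-y_v(\lfloor u \rfloor))\,du$. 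This telescopes into $\frac{1}{\ln c}\sum_{k\geq 1}(1-y_v(k)) = \frac{col_v - 1}{\ln c}$, using $\sum_k x_{v,k}=1$ from constraint~\eqref{primal1}.

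\medskip

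\noindent Assembling the pieces, the main term contributes $\frac{\gamma(c+1)}{2(c-1)} \cdot \frac{\rho(c-1)}{(1-(1-\rho)c)\ln c} \cdot col_v = \frac{\rho\gamma(c+1)}{2(1-(1-\rho)c)\ln c}\cdot col_v$, and the leftover additive constants from the ``$-1$'' in $col_v - 1$ and from $\mathbf{E}_\Gamma[\gamma(1/2 - h/(c-1))]$ combine into a constant of the form $\gamma(1/2 - 1/\ln c)$, which is $\leq 0$ precisely because of the hypothesis $c < e^2$ (i.e., $\ln c < 2$); so this constant can be dropped and the desired bound follows. The hardest step to execute cleanly is the change of variables step tying $\mathbf{E}_\Gamma\bigl[\sum_i c^{\Gamma+i}(1-y_{v,i})\bigr]$ to $col_v$, as it requires care with the floor function arising from $y_{v,i}$ being defined at the integer $\lfloor k_i \rfloor$; but the telescoping identity $\sum_{k\geq 1}(1-y_v(k)) = col_v - 1$ resolves this cleanly, and checking that the constant-term residue is non-positive under the assumption $c < e^2$ is the only place where that upper bound on $c$ is used.
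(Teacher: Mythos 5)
Your proof is correct and follows the same overall strategy as the paper (condition on $h$ via Lemma~\ref{lemma:fixed}, control $A=\sum_j p_{v,j-1}\Delta_j$ via Lemma~\ref{lemma:bound}, then average over $\Gamma$ and discard the residual constant using $c<e^2$), but two of the key computations are executed differently. Where you fully unfold the recurrence of Lemma~\ref{lemma:bound} into $p_{v,j}\le \rho\sum_{i\le j}(1-\rho)^{j-i}(1-y_{v,i})+(1-\rho)^{j+1}$ and swap sums, the paper instead uses the self-similarity $\Delta_j=c\,\Delta_{j-1}$ (for $j\ge 2$, with $\Delta_0+\Delta_1=c\,\Delta_0$) to get the one-line bound $A\le \rho\sum_j(1-y_{v,j})\Delta_j+c(1-\rho)A$ and solves for $A$; and where you convert $\mathbf{E}_\Gamma\bigl[\sum_i c^{\Gamma+i}(1-y_{v,i})\bigr]$ to $\frac{1}{\ln c}\int_1^\infty(1-y_v(\lfloor u\rfloor))\,du=\frac{col_v-1}{\ln c}$ by a change of variables and telescoping, the paper computes $\mathbf{E}_h[\sigma(k)]=\frac{c-1}{\ln c}\cdot k$ per color and gets $\frac{c-1}{\ln c}\cdot col_v$ directly from $\sum_j(1-y_{v,j})\Delta_j=\sum_k\sigma(k)x_{v,k}$. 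The paper's route keeps the weights $\Delta_j$ intact and so never produces the extra $\frac{\rho h}{1-(1-\rho)c}$ additive term or the $-1$ in $col_v-1$; your route produces both, and the argument only closes because they cancel \emph{exactly}: the $-1$ contributes $-\frac{\rho\gamma(c+1)}{2(1-(1-\rho)c)\ln c}$ and the trailing term contributes $+\frac{\gamma}{2}\cdot\frac{c+1}{c-1}\cdot\frac{\rho}{1-(1-\rho)c}\cdot\mathbf{E}[h]=+\frac{\rho\gamma(c+1)}{2(1-(1-\rho)c)\ln c}$. I verified this cancellation, so your bound lands on the same expression plus $\gamma(\tfrac12-\tfrac{1}{\ln c})\le 0$; just note that your final sentence attributes the surviving constant to the ``$-1$'' combining with the $\gamma(1/2-h/(c-1))$ term, whereas in fact the ``$-1$'' is absorbed by the trailing additive term from the unfolding and the $\gamma(\tfrac12-\tfrac{1}{\ln c})$ stands alone --- worth stating precisely in a full write-up.
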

\begin{proof}
For brevity, let $\Delta_j = k_j - k_{j-1}$.
We first consider a fixed offset $h$. Let $A = \sum_{j \geq 0} p_{v, j-1} \cdot \Delta_j$ and recall, by Lemma \ref{lemma:fixed}, that the expected color of $v$ for a given $h$ is at most $\frac{\gamma}{2} \cdot \frac{c+1}{c-1} \cdot A + \gamma (\frac{1}{2} - \frac{h}{c-1})$.

Note $\Delta_j = c \cdot \Delta_{j-1}$ for $j \geq 2$ and $\Delta_0 + \Delta_1 = c \cdot \Delta_0$.
So from Lemma \ref{lemma:bound},
\begin{eqnarray*}
A & \leq & \sum_{j \geq 0} \rho \cdot (1-y_{v,j}) \cdot \Delta_j + (1-\rho) \sum_{j \geq 0} p_{v, j-2} \cdot \Delta_j \\
& = & \sum_{j \geq 0} \rho \cdot (1-y_{v,j}) \cdot \Delta_j + c \cdot (1-\rho) \cdot A.
\end{eqnarray*}
Rearranging and using $c < 1/(1-\rho)$, we have that
\[ A \leq \frac{\rho}{1-c \cdot (1-\rho)} \cdot  \sum_{j \geq 0} (1-y_{v,j}) \cdot \Delta_j. \]

For $1 \leq k \leq n$, let $\sigma(k)$ be $k_j$ for the smallest integer $j$ such that $k_j \geq k$.  Simple manipulation and recalling $y_{v,j} = \sum_{k \leq k_j} x_{v,j}$
shows $\sum_{j \geq 0} (1-y_{v,j}) \cdot \Delta_j = \sum_{k=1}^n \sigma(k) \cdot x_{v,k}$. 

The expected value of $\sigma(k)$ over the random choice of $h$, which is really over the random choice of $\Gamma \in [0,1)$, can be directly calculated as follows
where $j$ is the integer such that $k \in [c^j, c^{j+1})$.
\begin{eqnarray*}
{\bf E}_h[\sigma(k)] & = & \int_{0}^{\log_c k-j} c^{\Gamma+j+1} d\Gamma + \int_{\log_c k - j}^1 c^{\Gamma+j} d\Gamma \\
 & = & \frac{1}{\ln c} (c^{\log_c k + 1} - c^{j+1} + c^{j+1} - c^{\log_c k}) = \frac{c-1}{\ln c} \cdot k.
\end{eqnarray*}
We have just shown ${\bf E}_h[\sum_{j \geq 0} (1-y_{v,j}) \cdot \Delta_j] = \frac{c-1}{\ln c} \sum_{k \geq 0} k \cdot x_{v,k} = \frac{c-1}{\ln c} \cdot col_v$. So, we can now bound the unconditional color ${\bf E}_h[\phi(v)]$ using our previous lemmas.
\begin{eqnarray*}
{\bf E}_h[\phi(v)] & = & \frac{\gamma}{2} \cdot \frac{c+1}{c-1} \cdot {\bf E}_h[A] + \gamma \left(\frac{1}{2} - {\bf E}_h[h] / (c-1)\right) \\
& \leq & \frac{\rho \cdot \gamma \cdot (c + 1)}{2 \cdot (1 - (1 - \rho) \cdot c) \cdot \ln c} \cdot col_v + \gamma \left( \frac{1}{2} - {\bf E}_h[h] / (c-1) \right) \\
& = & \frac{\rho \cdot \gamma \cdot (c + 1)}{2 \cdot (1 - (1 - \rho) \cdot c) \cdot \ln c} \cdot col_v + \gamma \left( \frac{1}{2} - \frac{1}{\ln c} \right) \\
& \leq & \frac{\rho \cdot \gamma \cdot (c + 1)}{2 \cdot (1 - (1 - \rho) \cdot c) \cdot \ln c} \cdot col_v
\end{eqnarray*}
The first equality and inequality follow from linearity of expectation and known bounds on ${\bf E}[\phi(v) | h]$ and $A$. The second equality follows from the fact that ${\bf E}_h[h] = \int_{0}^1 c^\Gamma d\Gamma = \frac{c-1}{\ln c}$, and the last inequality is due to the fact that $c < e^2$ by assumption.

\end{proof}

To finish the proof of Lemma \ref{lem:lp}, observe the expected vertex-weighted sum of colors of all nodes is then at most
\[ \frac{\rho \cdot \gamma \cdot (c + 1)}{2 \cdot (1- (1-\rho)\cdot c)\cdot \ln c} \cdot \sum_{v \in V} w_v \cdot col_v \leq \frac{\rho \cdot \gamma \cdot (c + 1)}{2 \cdot (1- (1-\rho)\cdot c)\cdot \ln c} \cdot OPT.\]
Theorem \ref{scapprox} then follows by combing the $(1-\epsilon,1)$ \mkcs approximation (described in the next section) with this \msc approximation, choosing $c \approx 3.591$, and ensuring $\epsilon$ is small enough so $c < 1/\epsilon$.

We note Algorithm \ref{alg:msc} can be efficiently derandomized. First, there are only polynomially-many offsets of $h$ that need to be tried. That is, for each $k_j$, we can determine the values of $h$ that would cause $\lfloor \gamma \cdot k_j \rfloor$ to change and try all such $h$ over all $j$. Second, instead of randomly permuting the color classes in a $\gamma \cdot k_j$-coloring, we can order them greedily in non-increasing order of total vertex weight.

\section{A PTAS for \mkcs in Chordal Graphs} \label{sec:ptas}

We first find a perfect elimination ordering of the vertices $v_1, v_2, \ldots, v_n$.
This can be done in linear time, e.g., using lexicographical breadth-first search \cite{agt}.
Let $N^{left}(v) \subseteq V$ be the set of neighbors of $v$ that come before $v$ in the ordering, so $N^{left}(v) \cup \{v\}$ is a clique.
Recall that we are working with the following LP. The constraints we use exploit the fact that a chordal graph is $k$-colorable if and only if all left neighbourhoods of its nodes in a perfect elimination ordering have size at most $k-1$.

\begin{alignat}{3}
\text{maximize:} & \quad & \sum_{v \in V} w_v \cdot x_v \tag{{\bf K-COLOR-LP}} \label{lpcolor} \\
\text{subject to:} && x_v + x(N^{left}(v)) \leq \quad & k \quad && \forall~v \in V \label{boundk}  \\
&& x \in \quad & [0,1]^V \notag
\end{alignat}

Let $OPT_{LP}$ denote the optimal LP value and $OPT$ denote the optimal solution to the problem instance. Of course, $OPT_{LP} \geq OPT$ since the natural $\{0,1\}$ integer solution corresponding to a $k$-colorable subgraph of $G$
is a feasible solution.

We can now give a rounding algorithm as follows.

\begin{algorithm}[H]
\caption{MCSRound($G$, $k$)}
\label{alg:mkcs}
\begin{algorithmic}

\State let $0 \leq f(k) \leq 1$ be a value we will optimize later
\State find a perfect elimination ordering $v_1, v_2, \ldots, v_n$ for $G$
\State let $x$ be an optimal feasible solution to \eqref{lpcolor}
\State form $S'$ by adding each $v \in V$ to $S'$ independently with probability $(1-f(k)) \cdot x_v$.
\State $S \gets \emptyset$
\For{$v \in \{v_1, v_2, \ldots, v_n\}$}
    \State if $v \in S'$ and $S \cup \{v\}$ is $k$-colorable, add $v$ to $S$
\EndFor
\State \Return $S$
\end{algorithmic}
\end{algorithm}

\subsection{Analysis}

Observe that when we consider adding some $v \in S'$ to $S$, $S \cup \{v\}$ is $k$-colorable if and only if $|S \cap N^{left}(v)| \leq k-1$. This is easy to prove by noting that the restriction of a perfect elimination ordering of $G$ to a subset $S$ yields a perfect elimination ordering of $G[S]$. Because we consider the nodes $v$ according to a perfect elimination ordering of $G$, by adding $v$ the only possible left-neighbourhood of a node that could have size $\geq k$ is $N^{left}(v)$ itself.

We bound the probability that we select at least $k$ vertices from $N^{left}(v)$. The second moment method is used so that derandomization is easy.
Let $Y_u$ indicate the event that $u \in S'$. 
Then  ${\bf E}[Y_u^2]  = {\bf E}[Y_u] = (1 - f(k)) \cdot x_u$.
Fix some vertex $v$. Let $Y = \sum_{u \in N^{left}(v)} Y_u$. 
By constraint \eqref{boundk}, we have
\[ {\bf E}[Y] 
= \sum_{u \in N^{left}(v)} (1-f(k)) \cdot x_u 
\leq (1 - f(k)) \cdot k. \] 
And since each $Y_u$ is independent, we have again by constraint \eqref{boundk} that
\[ {\bf Var}[Y] = \sum_{u \in N^{left}(v)} {\bf Var}[Y_u] = \sum_{u \in N^{left}(v)} \left( {\bf E}[Y_u^2] - {\bf E}[Y_u]^2 \right) \leq \sum_{u \in N^{left}(v)} {\bf E}[Y_u^2] \leq k. \]
We are interested in
\[ {\bf Pr}[Y \geq k] \leq {\bf Pr}[|Y-E[Y]| \geq f(k) \cdot k]. \] 
By Chebyshev's inequality,
\[
{\bf Pr}[|Y-E[Y]| \geq f(k) \cdot k] \leq \frac{{\bf Var}[Y]}{f(k)^2 \cdot k^2} \leq \frac{k}{f(k)^2 \cdot k^2} = \frac{1}{f(k)^2 \cdot k}.
\]
From this, we find that the probability we actually select vertex $v$ is at least
\[ {\bf Pr}[Y_v \wedge (Y \leq k-1)] = {\bf Pr}[Y_v] \cdot {\bf Pr}[Y \leq k-1] \geq (1-f(k)) \cdot x_v \cdot \left(1-\frac{1}{f(k)^2 \cdot k}\right). \]
The first equality is justified because $Y$ only depends on $Y_u$ for $u \neq v$, so these two events are independent.

Choosing $f(k) = k^{-1/3}$ results in $v \in S$ with probability at least $x_v \cdot (1 - 2 \cdot k^{-1/3})$.
By linearity of expectation, the expected value of $S$ is at least $(1 - 2 \cdot k^{-1/3}) \cdot \sum_{v \in V} w_v \cdot x_v$. 

The PTAS for \mkcs in chordal graphs is now immediate. For any constant $\epsilon > 0$, if $k \geq 8/\epsilon^3$, then we run our LP rounding algorithm to get a $k$-colorable subgraph with weight at least $(1-\epsilon) \cdot OPT_{LP}$.
Otherwise, we run the exact algorithm in \cite{kColorBrute}, which runs in polynomial time since $k$ is bounded by a constant.

It is desirable to derandomize this algorithm so it always finds a solution with the stated guarantee. This is because we use it numerous times in the approximate separation oracle for \eqref{LD1}. Knowing it works all the time 
does not burden us with providing concentration around the probability we successfully approximately solve ${\bf LP}^{(\rho,\gamma)}$ as in Lemma \ref{lem:apxsolve}.
We can derandomize Algorithm \ref{alg:mkcs} using standard techniques since it only requires that the variables $Y_u, u \in V$ be pairwise-independent (in order to bound ${\bf Var}[Y]$).

\section{\apx-Hardness for \mkcs in Perfect Graphs}\label{sec:apxhard}
It is natural to wonder if \msc admits a better approximation in perfect graphs. Unfortunately, the techniques we used to get better approximations for chordal graphs do not extend immediately to perfect graphs. In \cite{PerfectBipartiteSubgraphHard}, Addario-Berry et al. showed \mkcs is \nph in a different subclass of perfect graphs than chordal graphs. Their proof reduces from the maximum independent set problem and it is easy to see it shows \mkcs is \apxh in the same graph class if one reduces from bounded-degree instances of maximum independent set as we now show.

\begin{proof}[of Theorem \ref{thm:apxhard}]
In \cite{PerfectBipartiteSubgraphHard}, a polynomial-time reduction is given that will produce a graph $H$ from a given graph $G$ such that the maximum 2-colorable subgraph in $H$ has size exactly $3 \cdot |V(G)| + 2 \cdot |E(G)| + \alpha(G)$ where $\alpha(G)$ is the size of the largest independent set in $G$. Consider applying this reduction when $G = (V,E)$ is a cubic graph (i.e., a simple graph where every vertex has degree exactly 3). 

Since $G$ is a cubic graph, $2 \cdot |E(G)| =  \sum_{v \in V} \deg(v) = 3 \cdot |V(G)|$ so the maximum independent set size is in fact $6 \cdot |V(G)| + \alpha(G)$. Also observe $\alpha(G) \geq |V|/4$ since any $n$-node graph with maximum degree $d$ has an independent set of size $n/(d+1)$ that can be constructed by greedily picking nodes not adjacent to any previously-picked nodes.

Thus, for any $0 < c < 1$ if there is a $c$-approximation for the maximum 2-colorable subgraph of $H$ then we would have computed a value $z$ such that
\[ 6 \cdot |V(G)| + \alpha(G) \geq z \geq c \cdot (6 \cdot |V(G)| + \alpha(G)). \]
So $(z - c \cdot 6 \cdot V(G))/(25 - 24 \cdot c) \in [\alpha(G)/(25 - c \cdot 24), \alpha(G)]$, i.e. we have a $1/(25-24 \cdot c)$-approximation for the value of $\alpha(G)$. But the \textsc{Maximum Independent Set} problem in cubic graphs is \apx-hard meaning
a $c$-approximation for the maximum $2$-colorable subgraph problem cannot (barring surprising developments in complexity theory) exist for constants $c$ sufficiently close to $1$.

\end{proof}

\section{Conclusion}
Our approach, or a refinement of it, may succeed in getting a good approximation for \msc in perfect graphs if one has good constant approximations for \mkcs in perfect graphs. Note that \mkcs can be approximated within $1-1/e \approx 0.6321$ in perfect graphs simply by using the {\em maximum coverage} approach. That is, for $k$ iterations, we greedily compute a maximum independent set of nodes that are not yet covered. This is not sufficient to get an improved \msc approximation in perfect graphs using Lemma \ref{lem:lp}. That is, Lemma \ref{lem:lp} can yield improve \mkcs approximation if we get a sufficiently-good $(\approx0.704)$ approximation for \mkcs. As a starting point, we ask if there is a $\rho$-approximation for \mkcs in perfect graphs for some constant $\rho > 1-1/e$.

\bibliographystyle{splncs04}

\bibliography{sc}

\end{document}